\renewcommand{\preceq}{\preccurlyeq}
\renewcommand{\succeq}{\succcurlyeq}
\newcommand{\powerset}[1]{2^{#1}}
\newcommand{\N}{\mathcal{N}}
\newcommand{\A}{\mathcal{A}}
\newcommand{\X}{\mathcal{X}}
\newcommand{\F}{\mathcal{F}}
\renewcommand{\P}{\mathcal{P}}
\newcommand{\Q}{\mathcal{Q}}
\renewcommand{\SS}{\mathcal{S}}
\newcommand{\G}{\mathcal{G}}
\newcommand{\E}{\mathcal{E}}
\newcommand{\D}{\mathcal{D}}
\newcommand{\U}{\mathcal{U}}
\newcommand{\Pref}{\mathsf{Pre}}
\newcommand{\lattice}{\mathcal{L}}
\newcommand{\prefers}{\succsim}
\newcommand{\profile}{\boldsymbol{\pi}}
\newcommand{\indif}{\sim}
\newcommand{\join}{\vee}
\newcommand{\meet}{\wedge}
\newcommand{\bigjoin}{\bigvee}
\newcommand{\bigmeet}{\bigwedge}
\renewcommand{\leq}{\leqslant}
\renewcommand{\geq}{\geqslant}
\newcommand{\graph}{\mathcal{G}}
\DeclareMathOperator{\cl}{cl}
\DeclareMathOperator{\Cl}{Cl}
\DeclareMathOperator{\Fix}{Fix}
\DeclareMathOperator{\Aggregate}{Aggregate}
\DeclareMathOperator{\Median}{Median}
\newcommand\doublelcurly{\stackengine{0pt}{\lbrace}{\kern-0.3ex\lbrace}{O}{l}{F}{F}{L}}
\newcommand\doublercurly{\stackengine{0pt}{\rbrace}{\kern-0.3ex\rbrace}{O}{l}{F}{F}{L}}
\newtheorem{theorem}{Theorem}
\newtheorem{lemma}{Lemma}
\newtheorem{proposition}{Proposition}
\newtheorem{definition}{Definition}
\newtheorem{assumption}{Assumption}
\begin{document}

\title{\bf Network Preference Dynamics using Lattice Theory}

\author{Hans Riess$^{1}$~Gregory Henselman-Petrusek$^{2}$~Michael C.~Munger$^{3}$~Robert Ghrist$^{4}$ \\ Zachary I.~Bell$^{5}$~Michael M.~Zavlanos$^{1,6}$
\thanks{$^{1}$Dept.~ECE, Duke University, Durham, NC, USA.}%
\thanks{$^{2}$Pacific Northwset National Lab, Richland, WA, USA.}%
\thanks{$^{3}$Dept.~Political Science, Duke University, Durham, NC, USA.}%
\thanks{$^{4}$Dept.~ESE, University of Penn., Philadelphia, PA, USA.}%
\thanks{$^{5}$Air Force Research Laboratory, Eglin AFB, FL, USA.}%
\thanks{$^{6}$Dept.~Mech.~Eng.~\& Material Sci, Duke University, Durham, NC, USA.}%
\thanks{Corresponding author email: {\tt hans.riess@duke.edu}.}%
\thanks{This work is supported in part by AFOSR under award \#FA9550-19-1-0169 and by ONR under agreement \#N00014-18-1-2374.}%
}

\maketitle

\begin{abstract}
Preferences, fundamental in all forms of strategic behavior and collective decision-making, in their raw form, are an abstract ordering on a set of alternatives. Agents, we assume, revise their preferences as they gain more information about other agents. Exploiting the ordered algebraic structure of preferences,  we introduce a message-passing algorithm for heterogeneous agents distributed over a network to update their preferences based on aggregations of the preferences of their neighbors in a graph. We demonstrate the existence of equilibrium points of the resulting global dynamical system of local preference updates and provide a sufficient condition for trajectories to converge to equilibria: stable preferences. Finally, we present numerical simulations demonstrating our preliminary results. 
\end{abstract}


\section{Introduction}

When agents or teams of agents in multi-agent systems are responsible for unilateral decision making, a number of feasible options are ranked based on independent information sources as well as the biases of agents. When there is not enough information or time to adequately compare all feasible options, a decision dilemma ensues\cite{lingel2021leveraging}. In this paper, we design a decentralized mechanism for multi-agent systems to locally share preferences in order to augment each agent's independent decision-making capabilities and, when possible, avoid operational paralysis. As our main contribution, we introduce and offer a qualified solution to the \emph{preference dynamics problem} which asks how preferences of agents evolve as they gather information and interact with other agents. This challenge is distinguished from the \emph{social choice problem}, to design (centralized) preference-aggregation mechanisms that satisfy criteria such as fairness or resistance to strategic manipulation (for instance, see \cite{munger2015}).

In their traditional use in economics and social science, preferences are commonly associated with subjective taste. Here, we take the liberty to extend the notion of preference to arbitrary decision-making contexts, ranging from reinforcement learning (RL)\cite{natarajan2005} to games \cite{roughgarden2010} to language models \cite{ziegler2019}.  Formally, a \emph{preference relation} is a relative ordering on a set of \emph{alternatives} (i.e.~possible choices or options). While preference relations represented by utility functions%
\footnote{A preference relation $\prefers$ is representable by a utility function $u$ if $u(a) \geq u(b)$ if and only if $a \prefers b$.}%
(which is in itself a strong assumption since even complete preferences can be unrepresentable \cite{Beardon2002}) come with a slew of analytical techniques and computational algorithms, e.g.~a Nash equilibrium, raw preference relations have no such exploitable analytical structure. To analyze raw preference relations, in this paper we resort to algebraic techniques, rather than analytic, posing a new set of challenges. However, our primary motivation for avoiding utility theory is that time-variant utility functions cannot capture preference formation since cardinal or ordinal utilities necessarily compare every alternative. Moreover, the distinction between indifference and indecisiveness, i.e.,~the agent has not made up its mind or faced a choice \cite{eliaz2006}, is captured by incomplete preference relations and not utility functions, although extended preferences are another approach here \cite{harsanyi1955}.

We propose an entirely new methodology that relies on imposing an information order on preference relations (Section \ref{sec:lattices}) to reason about incomplete preferences which, also, is capable of modeling preference change (Section \ref{sec:dynamics}). 
Mechanisms of preference change include revision, contradiction, as well as addition and subtraction of alternatives \cite{hansen1995}. Our message-passing model of preference dynamics, based on the algebraic lattice structure of preference relations, incorporates elements of both revision (the join operation) and contradictions (the meet operation).
Avoiding the philosophical question as to whether preference can in fact change (see \cite{stigler1977}), our preference dynamics model can be viewed as a process for how preferences are formed in the first place as new information is revealed. The ``true'' preferences of agents, then, are, perhaps, those that are stable: they are equilibrium points in a preference dynamical system. We characterize the stable preferences and discuss sufficient conditions for when they are computable (Section \ref{sec:stable}).


Closely related to the preference dynamics problem considered here are recent efforts on consensus \cite{riess2022} and synchronization \cite{riess2023} using lattice orders. Other works have used lattice theory to model preferences \cite{curello2019}, although the lattice structure they introduce is different from ours. Several authors have studied aggregation or consensus on lattices \cite{karacal2017,chambers2011}, and there is a corpus of literature addressing the design of preference-aggregation mechanisms, originating with \cite{arrow2012}, but, to our knowledge, none of these works have approached the problem of preference dynamics from either a decentralized or algebraic point-of-view, both of which we do here. The several efforts to formalize classical consensus algorithms with lattice theory \cite{barthelemy1991}, motivated by data science \cite{jeanpierre1986} and distributed systems \cite{attiya1995}, are restricted to interaction graphs with a complete or star topology.

Our treatment of preference dynamics is also closely related to opinion dynamics (see \cite{noorazar2020} for a survey), originating with \cite{degroot1974}. Opinions are typically framed as numeric representations of likes or dislikes, versus preferences which depict comparisons. Several of the mechanisms for preference change we discuss (Section \ref{sec:dynamics}) have analogues in opinion dynamics including lying, exaggerating, or downplaying \cite{hansen2021}, stubbornness \cite{ghaderi2014}, incorporating a notion of confidence in other agents' opinions \cite{hegselmann2002,blondel2009}, and confirmation bias \cite{hayhoe2017}.

\section{Problem Definition}
\label{sec:problem}
\vspace{-0.25em}
Suppose agents are collected in a finite set $\N = \{1,2,\dots,N\}$ and interact with other agents through a fixed undirected graph $\graph = (\N,\E)$. The set $\E \subseteq \N \times \N$ is the set of \emph{edges} of the graph, $(i,j) \in \E$ if Agent $i$ and Agent $j$ interact. For $i \in \N$, let $\N_i = \{ j~:~(i,j) \in \E\}$ denote the set of \emph{neighbors}. Agents form preferences over a fixed set of alternatives, denoted by $\A$.
The set of all feasible preference relations over a fixed set of alternatives $\A$ is denoted by $\Pref(\A)$. Preference relations (elements) in the set $\Pref(\A)$ are denoted by $a \prefers_i b$ (instead of as an ordered pair) and indexed by each agent $i \in \N$. Thus, the expression $a \prefers_i b$ conveys the meaning that \emph{Agent $i$ prefers Alternative $a$ to Alternative $b$}. We say that an agent is indifferent between two alternatives, written $a \indif_i b$, if both $a \prefers_i b$ and $b \prefers_i a$ hold. We assume that agents' preferences satisfy the following minimal axioms.






\begin{assumption}[Reflexivity] \label{ass:reflexivity}
    The preference $\prefers_i$ satisfies $a \prefers_i a$ for all $a \in \A$ and for all $i \in \N$.
\end{assumption}
\begin{assumption}[Transitivity]\label{ass:transitivity}
    For all $i \in \N$ and for all $a, b, c \in \A$ such that $a \prefers_i b$ and $b \prefers_i c$, it is necessarily true that also $a \prefers_i c$.
\end{assumption}

Irreflexive preferences are said to be \emph{strict} while reflexive preferences are \emph{weak}. One can always obtain a strict preference from a weak preference by defining $a \succ_i b$ if and only $a \succeq_i b$ and $b \not\succeq_i a$. Intransitive preferences are inconsistent. We recall from the introduction that we do \emph{not} assume preferences are complete, i.e.,~we do not assume for all $a,b \in \A$, $a \prefers_i b$ or $b \prefers_i a$. Thus, an agent may be indecisive, perhaps making comparisons between alternatives with limited information.


\subsection{Preference Dynamics}

Let $\profile = \left( \prefers_1, \prefers_2, \dots, \prefers_N \right)$ be a tuple of preference relations indexed by each agent, called a \emph{preference profile}. We study how a preference profile changes based on interactions between agents. If $J \subseteq \N$ is a group of agents, we let $\pi_{J}$ be the preference profile restricted the indices in $J$; in particular, if $J = \{j\}$, $\pi_j$ is simply $\prefers_j$.
Agents update their preferences iteratively at discrete time instants $t=0,1,2,\dots$ according to coupled dynamics
\begin{align}
    \begin{aligned}
        \pi_i(t+1) &=& F_i \bigl( \profile(t) \bigr)  \\
        \pi_i(0) &=& \pi_{i,0}
    \end{aligned}, \quad i \in \N
    \label{eq:pi-dynamics}
\end{align}
where $F_i: \prod_{i \in \N}  \Pref(\A) \to \Pref(\A)$.

The first problem we address in this paper is modeling preference dynamics over a graph. Unlike opinion dynamics, in which standard dynamical systems theory can model how likes and dislikes on topics change over time, preference dynamics requires some level of algebraic sophistication. Thus, notions of structure-preserving maps as well as binary operations on preference relations are introduced with no assumed background (Section \ref{sec:lattices}). At the same time, if we suppose that agents update their preferences according to the dynamics in \eqref{eq:pi-dynamics}, we also study the class of maps $F_i$ that preserve the structure of preferences and model the process of preference revision. Specifically, in designing $F_i$, we focus on several aspects of preference change including agents having different ``personalities'' informing how they communicate, aggregate, and update preferences, agents fairly incorporating new comparisons into their preferences given a high enough level of consensus, all the while maintaining the consistency (transitivity) of their own preferences.

The second problem we address is a computational one. Specifically, assuming that preferences evolve according to the coupled dynamical system \eqref{eq:pi-dynamics} and the iterative maps $F_i$ are known, we analyze the existence of equilibrium points and discuss ways to compute them. Moreover, we provide sufficient conditions for the dynamics to converge to an equilibrium point. We interpret these equilibrium points as stable preference profiles under a model of preference dynamics.


\vspace{-1em}
\section{Preference Lattices}
\label{sec:lattices}

In this section, we show that the set of preferences on a fixed set of alternatives satisfying Assumption \ref{ass:reflexivity}-\ref{ass:transitivity} is an algebraic structure known as a \emph{lattice} (e.g.~see \cite{roman2008}). We, then, interpret the meaning of the binary operations of the lattice, called \emph{meet} and \emph{join}, as two ways for agents to amalgamate preferences.
But first, we present some rudimentary material about lattices, starting with even more general binary relations.

\vspace{-0.25em}
\subsection{Preorders \& Partial Orders}
\vspace{-0.25em}

A \emph{preorder} is a set $\P$ with a relation $\precsim$ satisfying the axioms of transitivity and reflexivity. Preferences, we assume, are preorders, and we will use both terms interchangeably throughout this paper. A \emph{partial order} is a preorder, with the order relation usually written $\preceq$, satisfying a third axiom: if $\pi_1 \preceq \pi_2$ and $\pi_2 \preceq \pi_1$, then $\pi_1 = \pi_2$. Note, we do not assume preferences satisfy this third axiom, because agents can be indifferent between two alternatives. In partial orders, however, we conveniently write $\pi_1 \prec \pi_2$ whenever $\pi_1 \preceq \pi_2$, but $\pi_2 \not \preceq \pi_1$.

We consider several types of maps between ordered sets. A map $f: \P \to \Q$ between partial orders is \emph{monotone} if $\pi_1 \preceq_{\P} \pi_2$ implies $f(\pi_1) \preceq_{\Q} f(\pi_2)$. A map $f: \P \to \P$ is \emph{inflationary} if $f(\pi) \succeq \pi$ for all $\pi \in \P$, and \emph{deflationary} if $f(\pi) \preceq \pi$ for all $\pi \in \P$. Note, inflationary or deflationary functions are not necessarily monotone.
Cartesian products of partial orders $\P_1 \times \P_2 \times \cdots \times \P_N$ yield a partial order called the \emph{product order}: $(\pi_1, \pi_2, \dots, \pi_N) \preceq (\pi_1', \pi_2', \dots, \pi_N')$ if and only if $\pi_i \preceq \pi'_i$ for all $i = 1,2,\dots, N$. 


\vspace{-0.25em}
\subsection{Lattices}
\vspace{-0.25em}

Lattices are partial orders with a rich algebraic structure given by two merging operations called ``meet'' and ``join.''
 
\begin{definition}[Lattices] \label{def:lattice}
    A \emph{lattice} is a partial order $(\lattice, \preceq)$ such that, for any two elements $\pi_1, \pi_2 \in \lattice$, the operations
    \begin{align*}
        \begin{aligned}
        \pi_1 \meet \pi_2 &=& \max\{ \pi \in \lattice: \pi \preceq \pi_1,~\pi \preceq \pi_2 \}, \\
        \pi_1 \join \pi_2 &=& \min\{ \pi \in \lattice: \pi \succeq \pi_1,~\pi \succeq \pi_2 \},       
        \end{aligned}
    \end{align*}
    called \emph{meet} (greatest lower bound) and \emph{join} (least upper bound), exist. 
\end{definition}

We recall a set of properties that characterize lattices as ordered algebraic structures.

\begin{lemma} \label{lem:lattice-axiom} 
    Suppose $(\lattice,\preceq)$ is a lattice. Then, $\meet, \join: \lattice \times \lattice \to \lattice$ satisfy the following:
    \begin{itemize}
        \item[(i)] \emph{commutativity}, i.e.,~$\pi_1 \meet \pi_2 = \pi_2 \meet \pi_1$, $\pi_1 \join \pi_2 = \pi_2 \join \pi_1$;
        \item[(ii)] \emph{associativity}, i.e., {\small $\pi_1 \meet (\pi_2 \meet \pi_3) = (\pi_1 \meet \pi_2) \meet \pi_3$}, etc.;
        \item[(iii)] \emph{idempotence}, i.e.,~$\pi \meet \pi = \pi \join \pi =  \pi$;
        \item[(iv)] \emph{absorption}, i.e.,~$\pi_1 \join \left( \pi_1 \meet \pi_2 \right) = \pi_1 \meet \left( \pi_1 \join \pi_2 \right) = \pi_1$;
        \item[(v)] \emph{monotonicity}, i.e., $\join$ and $\meet$ are monotone in both arguments.
    \end{itemize}
\end{lemma}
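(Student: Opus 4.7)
The plan is to derive each of the five properties directly from the defining universal properties of meet and join, namely, that $\pi_1 \meet \pi_2$ is the \emph{unique} greatest element of the set of common lower bounds of $\pi_1$ and $\pi_2$, and dually for $\join$. Uniqueness here comes from the antisymmetry axiom of the underlying partial order, which will be invoked silently throughout.

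For (i) commutativity, I would simply observe that the sets $\{\pi \in \L : \pi \preceq \pi_1,\ \pi \preceq \pi_2\}$ and $\{\pi \in \L : \pi \preceq \pi_2,\ \pi \preceq \pi_1\}$ are literally identical, so their maxima coincide; the dual argument handles $\join$. For (iii) idempotence, note that the set of common lower bounds of $\pi$ with itself is just the principal downset $\{\pi' : \pi' \preceq \pi\}$, whose maximum is visibly $\pi$ itself; dually for $\join$. For (v) monotonicity, I would argue that if $\pi_1 \preceq \pi_1'$, then $\pi_1 \meet \pi_2 \preceq \pi_1 \preceq \pi_1'$ and $\pi_1 \meet \pi_2 \preceq \pi_2$, so $\pi_1 \meet \pi_2$ is a common lower bound of $\pi_1'$ and $\pi_2$; since $\pi_1' \meet \pi_2$ is by definition the greatest such, we conclude $\pi_1 \meet \pi_2 \preceq \pi_1' \meet \pi_2$. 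Monotonicity in the second argument, and the two monotonicity statements for $\join$, are dual.

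The slightly more interesting cases are associativity and absorption. For (ii), my plan is to show by direct verification that both $\pi_1 \meet (\pi_2 \meet \pi_3)$ and $(\pi_1 \meet \pi_2) \meet \pi_3$ satisfy the universal property of being the greatest common lower bound of $\{\pi_1, \pi_2, \pi_3\}$; then uniqueness forces them to be equal. Concretely, $\pi_1 \meet (\pi_2 \meet \pi_3)$ is $\preceq$ each of $\pi_1, \pi_2, \pi_3$ by unfolding the definition of $\meet$ twice, and if $x$ is any element with $x \preceq \pi_i$ for $i = 1,2,3$, then $x \preceq \pi_2 \meet \pi_3$ (as $x$ is a common lower bound of $\pi_2, \pi_3$), whence $x \preceq \pi_1 \meet (\pi_2 \meet \pi_3)$. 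The parenthesization $(\pi_1 \meet \pi_2) \meet \pi_3$ is handled symmetrically, and the $\join$ case is dual. For (iv) absorption, I would show $\pi_1 \join (\pi_1 \meet \pi_2) = \pi_1$ by noting that $\pi_1 \meet \pi_2 \preceq \pi_1$ and $\pi_1 \preceq \pi_1$, so $\pi_1$ is itself an upper bound of the pair $\{\pi_1, \pi_1 \meet \pi_2\}$; it is trivially the least such, because any upper bound must in particular dominate $\pi_1$. The companion identity $\pi_1 \meet (\pi_1 \join \pi_2) = \pi_1$ follows by the dual argument.

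The \emph{main obstacle} is not mathematical difficulty but bookkeeping: every clause is a one-line unfolding of the lattice definition, but writing them all out cleanly requires being disciplined about distinguishing the roles of ``is a lower bound'' from ``is the greatest lower bound.'' In particular, the associativity argument is the only place where one appeals nontrivially to the universal property in both directions, so this is where I would be most careful. Once associativity is written down clearly, all other items reduce to one or two lines each, and the $\join$ cases follow from the $\meet$ cases by order-theoretic duality.
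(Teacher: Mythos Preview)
Your argument is correct and is precisely the standard derivation of the lattice identities from the universal properties of meet and join; there is no gap. The paper itself does not write out a proof at all but simply cites Birkhoff's \emph{Lattice Theory} (\S I.5, Lemmas~1 and~3), so your direct verification is in effect what the cited reference contains, spelled out rather than delegated.
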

\begin{proof}
    See \cite[\S I.5 Lemma 1, \S I.5 Lemma 3]{birkhoff1940}.
\end{proof}



For lattices that are not finite, there is a distinction between the existence of binary meets and joins versus arbitrary meets and joins. Given $\{x_j\}_{j \in J} \subseteq \lattice$, a lattice is \emph{complete} if $\bigmeet_{j \in J} \pi_j = \max\{\pi \in \lattice : \pi \preceq \pi_j~\forall j \in J\}$ and $\bigjoin_{j \in J} \pi_j  = \min\{\pi \in \lattice : \pi \succeq \pi_j~\forall j \in J\}$ exist.
In particular, $\bigmeet \emptyset = \top$, the \emph{maximum element}, and $\bigjoin \emptyset = \bot$, the \emph{minimum element}. By the obvious induction argument, finite lattices are complete.

\vspace{-0.75em}
\subsection{The Information Order}
\vspace{-0.25em}

As discussed in Section \ref{sec:problem}, let $\Pref(\A)$ denote the set of preorders over a ground set $\A$, called \emph{preference relations}. We equip $\Pref(\A)$ with the partial order inherited from the inclusion order on the powerset of $\A \times \A$, which we call the \emph{information order}. Let $\pi_1, \pi_2 \in \Pref(\A)$. We write $\pi_1 \preceq \pi_2$ if $\pi_1$ is contained in $\pi_2$ as subsets of $\A \times \A$. The minimum element in the information order is the preference relation, written $\epsilon$, defined $a \prefers a$ for all $a \in \A$. An agent with this preference relation has not compared any alternatives. The maximum element, written $\iota$, in the information order satisfies $a \prefers b$ for all pairs $(a,b) \in \A \times \A$. An agent with this preference relation is indifferent between any two alternatives. The information order extends to preference profiles via the product. Suppose $\profile, \profile' \in \prod_{i \in \N}  \Pref(\A)$. Then, in abuse of notation, we write $\profile \preceq \profile'$ if $\pi_i \preceq \pi_i'$ for every $i \in \N$.

Preference relations satisfying Assumption \ref{ass:reflexivity} and \ref{ass:transitivity} can be obtained from arbitrary binary relations via a unary operation. Suppose $\pi_1, \pi_2 \subseteq \A \times \A$ are arbitrary binary relations. Then, $\pi_2 \circ \pi_1 = \{ (a,b) \in \A \times \A : \exists c,~(a,c) \in \pi_1,~(c,b) \in \pi_2 \}$ is a preference relation that is the composition of $\pi_1$ and $\pi_2$.

\begin{definition}[Transitive Closure] \label{def:transitive-closure}
    Suppose $\pi \subseteq \A \times \A$. The \emph{transitive closure} of $\pi$ is the preference relation $\pi^{\star} = \bigcup_{p=1}^{\infty} \pi^{\circ k}$, where $\pi^{\circ k} = \overbrace{\pi \circ \pi \cdots \circ \pi}^{k}$. The \emph{transitive-reflexive closure} of $\pi$ is the preference relation $\pi^{+} = \left( \pi \cup \epsilon \right)^\star$.
\end{definition}

It follows that if $\pi$ is an arbitrary binary relation on $\A$, then $\pi^{+}$ is the smallest (transitive-reflexive) preference relation on $\A$ containing $\pi$; see, for instance, \cite[Theorem 1.1\textbf{}7]{roman2008}.

\subsection{Meets \& Joins of Preferences}

The transitive closure as well as the union and intersection of preference relations are enough to specify the structure of a lattice.

\begin{theorem} \label{thm:complete-lattice}
    $(\Pref(\A), \succeq)$ is a complete lattice with meets and joins given by the following
    \leavevmode
    \begin{align}
    \begin{aligned}
        \bigmeet_{j \in J} \pi_j = \bigcap_{j \in J} \pi_j, \quad 
        \bigjoin_{j \in J} \pi_j= \biggl(\bigcup_{j \in J} \pi_j \biggr)^{+}.
    \end{aligned}\label{eq:meets-joins}
\end{align}
\end{theorem}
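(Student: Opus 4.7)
The plan is to establish both universal properties directly from the set-theoretic definitions of intersection and transitive-reflexive closure. First, I would verify that the right-hand sides of \eqref{eq:meets-joins} genuinely lie in $\Pref(\A)$. The intersection $\bigcap_{j \in J} \pi_j$ is reflexive since every $\pi_j$ contains each $(a,a)$, and transitive since $(a,b),(b,c) \in \bigcap_{j \in J} \pi_j$ forces both pairs into each $\pi_j$, whose own transitivity delivers $(a,c) \in \pi_j$ for all $j$. The expression $\bigl(\bigcup_{j \in J} \pi_j\bigr)^{+}$ is a preorder by Definition~\ref{def:transitive-closure} and the remark following it, which identifies $\pi^{+}$ as the smallest transitive-reflexive relation containing $\pi$.

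Next, I would check the universal properties in the information order (inclusion on $\powerset{\A \times \A}$) as specified in \eqref{eq:complete lattice}. For the meet, $\bigcap_{j \in J} \pi_j \subseteq \pi_k$ for every $k \in J$, and any $\pi \in \Pref(\A)$ with $\pi \subseteq \pi_k$ for all $k$ satisfies $\pi \subseteq \bigcap_{j \in J} \pi_j$ by elementary set theory; this is exactly the greatest-lower-bound characterization. For the join, the chain $\pi_k \subseteq \bigcup_{j \in J} \pi_j \subseteq \bigl(\bigcup_{j \in J} \pi_j\bigr)^{+}$ shows that the closure is an upper bound. Conversely, for any $\pi \in \Pref(\A)$ with $\pi_k \subseteq \pi$ for all $k$, we have $\bigcup_{j \in J} \pi_j \subseteq \pi$, and then $\bigl(\bigcup_{j \in J} \pi_j\bigr)^{+} \subseteq \pi^{+} = \pi$ because $\pi$, being already reflexive and transitive, equals its own closure.

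The only real subtlety is the join step: the union of preorders is generally not transitive, since pairs like $(a,b) \in \pi_1$ and $(b,c) \in \pi_2$ need not produce $(a,c)$ anywhere in the union, so I cannot simply mirror the direct intersection argument. The transitive-reflexive closure is exactly the device that fixes this, behaving as a left adjoint to the inclusion $\Pref(\A) \hookrightarrow \powerset{\A \times \A}$; its universal property is precisely what realizes the least upper bound inside $\Pref(\A)$. Since the arguments work for arbitrary index sets $J$, including $J = \emptyset$, which yields the maximum $\iota = \A \times \A$ (the empty intersection) and the minimum $\epsilon$ (the closure of the empty union), completeness in the sense of \eqref{eq:complete lattice} follows at once.
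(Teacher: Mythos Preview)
Your proof is correct and complete: you verify directly that the proposed expressions are preorders, then establish the greatest-lower-bound and least-upper-bound properties from first principles, with the key observation that a preorder equals its own transitive-reflexive closure.

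The paper takes a genuinely different route. Rather than checking the universal properties by hand, it invokes the general machinery of \emph{closure operators and closure systems} (Appendix, Lemma~\ref{lem:joins-closure}): any closure operator $\cl$ on $\powerset{\X}$ has a fixed-point set $\Cl(\X)$ that is automatically a complete lattice, with meets given by intersection and joins by closure-of-union. The paper then shows (Lemma~\ref{lem:transitive-closure}) that $(-)^{+}$ is a closure operator on $\powerset{\A\times\A}$ and reads off the result. Your argument is more elementary and self-contained, and it makes the least-upper-bound verification explicit; the paper's argument is more structural, locating the theorem as an instance of a standard order-theoretic pattern and thereby importing the completeness and the formulas for $\bigmeet,\bigjoin$ in one stroke. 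Amusingly, your parenthetical remark that $(-)^{+}$ behaves as a left adjoint to the inclusion $\Pref(\A)\hookrightarrow\powerset{\A\times\A}$ is exactly the content of the closure-operator framework the paper leans on, so the two approaches are closer in spirit than they first appear.
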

\begin{proof}
    See Appendix.
\end{proof}


We now interpret the meet and join operations of the information order. A preference $a \prefers b$ is in the meet $\pi_1 \meet \pi_2$ if and only if it is in both $\pi_1$ and $\pi_2$. Thus, the meet of two preference relations constitutes a consensus between them. The join is somewhat more subtle. For one, $\pi_1 \join \pi_2$ is the smallest transitive-reflexive relation containing $\pi_1 \cup \pi_2$. In case that $\A$ is finite, another interpretation exacts what it means for a pair-wise comparison to be in the join of two preference relations.
\begin{proposition} \label{prop:join}
    Suppose $\A$ is finite, and suppose $\pi_1, \pi_2 \in \Pref(\A)$. Then, $(a \prefers b) \in \pi_1 \join \pi_2$ if and only if either, (i) $(a \prefers b) \in \pi_1 \cup \pi_2$, or, (ii) there exist a chain
    \begin{align}
        a = c_0 \prefers_{j_1} c_1 \prefers_{j_2}  \cdots \prefers_{\ell-1} c_{j_{\ell-1}} \prefers_{j_\ell} c_{\ell} = b \label{eq:chain}
    \end{align}
    such that $\left( c_{m-1} \prefers_{{j_m}} c_{m} \right) \in \pi_1 \Delta \pi_2$, the symmetric difference of $\pi_1$ and $\pi_2$.
\end{proposition}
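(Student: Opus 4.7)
The plan is to reduce the claim to a statement about the transitive-reflexive closure via Theorem \ref{thm:complete-lattice}, and then argue by taking a minimal-length witnessing chain.

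The reverse direction is immediate: if (i) holds, then $(a \prefers b) \in \pi_1 \cup \pi_2 \subseteq (\pi_1 \cup \pi_2)^{+} = \pi_1 \join \pi_2$ by Theorem \ref{thm:complete-lattice}; if (ii) holds, then the chain \eqref{eq:chain} exhibits $(a \prefers b) \in (\pi_1 \cup \pi_2)^{\circ \ell} \subseteq (\pi_1 \cup \pi_2)^{+}$, since each of its steps lies in $\pi_1 \Delta \pi_2 \subseteq \pi_1 \cup \pi_2$. (Finiteness of $\A$ is not needed here.)

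For the forward direction, suppose $(a \prefers b) \in \pi_1 \join \pi_2 = (\pi_1 \cup \pi_2)^{+}$. By Definition \ref{def:transitive-closure}, there is some finite chain from $a$ to $b$ in $\pi_1 \cup \pi_2 \cup \epsilon$; reflexive $\epsilon$-steps are self-loops which can be deleted, either collapsing to the case $a = b$ (whence $(a \prefers b) \in \pi_1$ by Assumption \ref{ass:reflexivity}, giving (i)) or yielding a chain of length $\ell \geq 1$ entirely in $\pi_1 \cup \pi_2$. Among all such chains from $a$ to $b$, pick one of minimal length $\ell$. If $\ell = 1$, then $(a \prefers b)$ is a single step in $\pi_{j_1}$, so (i) holds. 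If $\ell \geq 2$, I claim every step lies in the symmetric difference $\pi_1 \Delta \pi_2$, which gives (ii).

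For the claim, suppose for contradiction that some step $(c_{m-1} \prefers_{j_m} c_m)$ is in $\pi_1 \cap \pi_2$. Since $\ell \geq 2$, there is an adjacent step: either $(c_{m-2} \prefers_{j_{m-1}} c_{m-1}) \in \pi_{j_{m-1}}$ (when $m \geq 2$) or $(c_m \prefers_{j_{m+1}} c_{m+1}) \in \pi_{j_{m+1}}$ (when $m < \ell$). Treat the first case; the second is symmetric. Because $(c_{m-1} \prefers c_m)$ belongs to both $\pi_1$ and $\pi_2$, in particular it belongs to $\pi_{j_{m-1}}$. Applying transitivity (Assumption \ref{ass:transitivity}) of $\pi_{j_{m-1}}$, we obtain $(c_{m-2} \prefers c_m) \in \pi_{j_{m-1}} \subseteq \pi_1 \cup \pi_2$, so we may delete $c_{m-1}$ from the chain to produce a witnessing chain of length $\ell - 1$, contradicting minimality. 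The main obstacle is just organizing this collapsing argument and its edge cases ($a = b$, $\ell = 1$, and a cap step at either end of the chain); the core mechanism is simply that any step in $\pi_1 \cap \pi_2$ can be absorbed into either neighbor by transitivity.
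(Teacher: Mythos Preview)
Your argument is correct and is exactly what the paper's one-line proof (``Follows from Definition~\ref{def:transitive-closure} and Theorem~\ref{thm:complete-lattice}'') leaves to the reader: the minimal-chain collapsing step is the natural way to extract the symmetric-difference condition from the transitive-closure description of the join. Incidentally, as your proof makes clear, the finiteness hypothesis on $\A$ is not actually used---every element of $(\pi_1 \cup \pi_2)^{+}$ is by definition witnessed by some finite chain, so a shortest one exists regardless.
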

\begin{proof}
    Follows from Definition \ref{def:transitive-closure} and Theorem \ref{thm:complete-lattice}.
\end{proof}



\vspace{-0.5em}
\section{Preference Dynamics}
\label{sec:dynamics}
\vspace{-0.5em}

In this section, we introduce a preference dynamics model in a general message-passing framework and discuss particular cases that exploit the meet and join operation of the information order.
We demonstrate a specific case of our model in Section \ref{sec:experiments}.

\vspace{-0.25em}
\subsection{A Message-Passing Model}
\vspace{-0.25em}



We model how agents update their preferences during every round of interactions via a message-passing algorithm. The algorithm follows a familiar gather-scatter paradigm, commonly seen in graph machine learning \cite{dudzik2022,dudzik2023}. Agents send messages (also preference relations) to their neighbors in each round. Messages are generated by applying a function $\psi_i(\pi_i,\hat{\pi})$ to the preference relation $\pi_i$ of the sending agent and an estimate $\hat{\pi}$ of the receiving agent's preference relation.
Agents collect the messages (preference relations) that they have received and aggregate them into a single preference relation with a  function $\Aggregate_i(\cdot)$.
Finally, the agents update their preference (Line 9) as a result of the aggregation and their prior-held preference via a function  $\varphi_i(\pi_i,\cdot)$. If agents update preferences with full synchrony, the algorithm is succinctly summarized by a map $F_i$ in \eqref{eq:pi-dynamics} of the form
\begin{align}
    F_i(\profile) = \varphi_i\Bigl(\pi_i, \Aggregate_i{\left[ \psi_j(\pi_j, \pi_i)\right]_{j \in \N_i}} \Bigr). \label{eq:message-passing}
\end{align}

\vspace{-0.25em}
\subsection{Discussion}
\vspace{-0.25em}

For the remainder of this section, we discuss candidates for the functions $\psi, \Aggregate(\cdot)$, and $\varphi$, noting how various personalities of agents affect the choice of functions. In particular, we discuss functions that utilize the lattice structure of the information order.

\paragraph{Messages}
Message-passing algorithms (at least in the context of graph neural networks \cite{tailor2021}) have been compared to diffusion processes. Message functions $\psi_i(\pi_i,\pi_j)$ that are invariant to the second argument are said to be \emph{isotropic}, otherwise \emph{anisotropic}. In the isotropic case, if $\psi_i(\pi_i) = \pi_i$, Agent $i$ represents their preference faithfully and communicates it to Agent $j$. Such agents can be thought of as being honest. Dishonest isotropic agents may misrepresent their true preferences with a non-trivial map $\psi_i: \Pref(\A) \to \Pref(\A)$. In the opinion dynamics literature \cite{hansen2021}, behaviors such as exaggerating opinions, restricting the topics of discussion, or lying can be represented by similar maps, e.g.~a map $\psi_i$ which reverses the relative ordering of every pair of alternatives.
In the anisotropic case, agents represent their preferences as a result of feedback. For instance, Agent $i$ could mimic the preference of Agent $j$, i.e.~$\psi_i(\pi_i,\pi_j) = \pi_j$, but more complex feedback mechanisms are possible.

\paragraph{Aggregation}
Aggregating preference relations accurately is a significant challenge. Fortunately, the lattice structure of the information order provides some reasonable choices. In our message-passing framework, we sub-index the neighbor set, i.e., $\N_i = \{j_1,j_2,\dots,j_n\}$. We first consider some favorable properties of aggregation functions. Suppose $\Aggregate(\cdot): \bigcup_{n =1}^{N} \prod_{i \in \N}  \Pref(\A) \to \Pref(\A)$ is an aggregation function sending an $n$-tuple of preference relations to an aggregate preference relation, $\Aggregate(\pi_{j_1},\pi_{j_2}, \dots, \pi_{j_n})$. We say $\Aggregate(\cdot)$ satisfies \emph{anonymity} if for all $n$-tuples, $\Aggregate\left( \pi_{j_1},\dots,\pi_{j_n}\right) = \Aggregate\left(\pi_{\sigma(j_1)},\dots,\pi_{\sigma(j_n)}\right)$ for all permutations $\sigma \in \Sigma_n$,
 \emph{unanimity} if $\pi_{j} = \pi_{j'} = \pi$ for all $j, j' \in \{j_1,j_2,\dots,j_n\}$ implies $\Aggregate\left( \pi_{j_1},\pi_{j_2},\dots,\pi_{j_n}\right) = \pi$,
and $r$-\emph{middle} if, given $1 \leq r \leq n$, the ordering $\pi_{j_1} \preceq \cdots \preceq \pi_{j_r} \preceq \cdots \preceq \cdots \preceq \pi_{j_n}$ implies $\Aggregate\left( \pi_{j_1},\dots,\pi_{j_n}\right) = \pi_{j_r}$.

We now propose a family of aggregation functions that satisfies all three properties. We suppose agents have different aggregation functions in this family, depending on the characteristics of the agent.

\begin{definition}
    Suppose $r \geq 0$. The $r$-\emph{median} is the aggregation function
\begin{align}
    \Median_r\left( \pi_{j_1},\pi_{j_2,}\dots,\pi_{j_n}\right) =
    \bigjoin_{\underset{|J| \geq r }{J \subseteq \{1,2,\dots,n\}}} \biggl( \bigmeet_{m \in J} \pi_{j_m} \biggr).  \nonumber
\end{align}
\end{definition}

\begin{proposition} \label{prop:median}
    $\Median_r(\cdot)$ satisfies the axioms of anonymity, unanimity, and $r$-middle.
\end{proposition}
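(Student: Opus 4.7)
The plan is to verify each of the three axioms separately, each reducing to standard lattice identities from Lemma~\ref{lem:lattice-axiom}.

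For anonymity, the outer join ranges over the collection $\{J \subseteq \{1,\ldots,n\} : |J| \geq r\}$, which is invariant under any permutation $\sigma \in \Sigma_n$ acting on subsets. Combined with commutativity of $\meet$ and $\join$, permuting the inputs merely reindexes the terms in the double expression, leaving its overall value unchanged.

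For unanimity, substituting the common value $\pi$ for every argument and iterating idempotence collapses each inner meet (over a nonempty $J$, which we always have since $r \geq 1$) to $\pi$, and the outer join of copies of $\pi$ likewise collapses to $\pi$.

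For $r$-middle, the key observation is that in a totally ordered sequence the meet of any subset equals its least element and the join of any subset equals its greatest. Given $\pi_{j_1} \preceq \cdots \preceq \pi_{j_n}$, each inner meet simplifies to $\bigmeet_{m \in J} \pi_{j_m} = \pi_{j_{\min J}}$. As $J$ ranges over subsets of size at least $r$, the value $\min J$ attains precisely the indices $\{1, 2, \ldots, n-r+1\}$, so the outer join, being the maximum of a totally ordered family, returns the corresponding middle element of the sorted tuple.

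The main obstacle is the bookkeeping in the $r$-middle step; once the range of $\min J$ is identified, the result drops out from monotonicity of $\join$ applied to a totally ordered family. None of the three verifications requires anything beyond the lattice axioms of Lemma~\ref{lem:lattice-axiom}, so the proof is primarily a careful unpacking of the definition of $\Median_{r,n}$.
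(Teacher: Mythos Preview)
Your argument follows the paper's own strategy almost verbatim: anonymity via permutation-invariance of the index family, unanimity via idempotence, and $r$-middle via collapsing meets and joins along a chain. In fact your treatment of the third axiom is more explicit than the paper's one-line sketch.

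There is, however, a genuine gap in your $r$-middle step. Your computation is correct up to the point where you observe that $\min J$ ranges over $\{1,\dots,n-r+1\}$; but you then conclude only that the outer join ``returns the corresponding middle element of the sorted tuple'' without naming the index. If you finish the calculation, the join of the chain $\pi_{j_1} \preceq \cdots \preceq \pi_{j_{n-r+1}}$ is $\pi_{j_{n-r+1}}$, \emph{not} $\pi_{j_r}$. These coincide only when $n = 2r-1$; for instance with $n=3$, $r=1$ your formula gives $\pi_{j_3}$ while the $r$-middle axiom as stated demands $\pi_{j_1}$. The vague phrase ``corresponding middle element'' hides this mismatch rather than resolving it. You should state the index you actually obtain and either reconcile it with the axiom or flag the indexing discrepancy in the definitions (the paper's own proof, being terser, does not confront this point either).
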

\begin{proof}
    For the first, the set $\{I \subseteq \{1,2,\dots,n\}: |I| \geq r\}$ is invariant under permutation. For the second, use the idempotence property (Lemma \ref{lem:lattice-axiom}).  For the third, by anonymity, we can, without loss of generality, write each $J$ in the set $\{I \subseteq \{1,2,\dots,n\}: |I| \geq r\}$ as $\{j_m,j_{m+1},\dots,j_{m+r+1}\}$.
\end{proof}

Anonymity implies the order messages are received or the identity of the sender has no bearing on the aggregate preference, a notion of privacy and fairness, respectively. Unanimity implies that if every agent has the same preference, the aggregate should, surely, reflect this. Finally, the $r$-middle condition implies, if a series of agents contain each others preferences, meaning agents' preferences differ only by adding comparisons to the preference relation of an existing agent, the aggregate preference will be the $r$-th preference. In the following observation, the $r$-median can be interpreted as a coalition-formation mechanism.

\begin{proposition} \label{prop:median-chain}
    Suppose $\A$ is finite. Then, $a \prefers b$ is in $\Median_r(\pi_{\N_i})$ if and only if there exist a chain $a = c_0 \prefers_{j_1} c_1 \prefers_{j_2} \cdots \prefers_{\ell-1} c_{j_{\ell-1}} \prefers_{j_\ell} c_{\ell} = b$
    and a federation of neighbors into coalitions $\{ J_m : |J_m| \geq r \}_{m=1}^{\ell} \subseteq \N_i$ such that $(c_{m-1} \prefers_{j} c_{m}) \in \pi_j$ for all  $j \in J_m$.
\end{proposition}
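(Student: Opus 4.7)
The plan is to reduce the claim to the chain characterization of transitive-reflexive closures that underlies Proposition~\ref{prop:join}, applied to a single union of intersections. First I would unfold the definition of $\Median_{r,n}$ using Theorem~\ref{thm:complete-lattice}: the inner meets $\bigmeet_{m \in J} \pi_{j_m}$ become intersections $\bigcap_{m \in J} \pi_{j_m}$, and the outer join---which is a \emph{finite} join, since $\{J \subseteq \{1,\dots,n\} : |J| \geq r\}$ is a finite collection---becomes the transitive-reflexive closure of the resulting union. Writing
\begin{align*}
R := \bigcup_{\underset{|J| \geq r}{J \subseteq \{1,\dots,n\}}} \bigcap_{m \in J} \pi_{j_m},
\end{align*}
this gives $\Median_{r,n}(\pi_{j_1},\dots,\pi_{j_n}) = R^{+}$. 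By construction, $(c \prefers c') \in R$ if and only if there is a subset $J \subseteq \{1,\dots,n\}$ of size at least $r$ such that $(c \prefers c') \in \pi_{j_m}$ for every $m \in J$.

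Next I would invoke the chain characterization afforded by Definition~\ref{def:transitive-closure}: $(a \prefers b) \in R^{+}$ if and only if either $a = b$ (with $\ell = 0$, supplied by the reflexive part of $\pi^{+} = (\pi \cup \epsilon)^{\star}$) or there is a finite sequence $a = c_0, c_1, \dots, c_\ell = b$ with each consecutive pair $(c_{m-1}, c_m) \in R$. For the forward direction, each link $(c_{m-1}, c_m) \in R$ furnishes, by the observation at the end of the preceding paragraph, a coalition $J_m \subseteq \N_i$ with $|J_m| \geq r$ whose members all hold the comparison $c_{m-1} \prefers c_m$. This is exactly the chain-and-federation structure asserted in the statement.

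For the converse, given a chain and federation $\{J_m\}_{m=1}^{\ell}$ as in the hypothesis, each link $(c_{m-1}, c_m)$ lies in $\bigcap_{j \in J_m} \pi_j \subseteq R$, so composing along the chain places $(a \prefers b)$ in $R^{+}$, and hence in $\Median_{r,n}(\pi_{\N_i})$.

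I expect the only real subtlety to be the bookkeeping that converts the lattice-theoretic outer join in the definition of $\Median_{r,n}$ into a transitive-reflexive closure via Theorem~\ref{thm:complete-lattice}, and correctly translating between the index $J \subseteq \{1,\dots,n\}$ appearing in that definition and the coalition $J_m \subseteq \N_i$ appearing in the conclusion. Finiteness of $\A$ plays the same role it did in Proposition~\ref{prop:join}, ensuring that the combinatorial chain description of $R^{+}$ is a clean equivalence rather than just one direction.
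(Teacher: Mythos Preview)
Your proposal is correct and follows essentially the same approach as the paper, which simply records ``Follows from Proposition~\ref{prop:join}.'' Your version is more explicit: you unwind $\Median_{r,n}$ via Theorem~\ref{thm:complete-lattice} into $R^{+}$ and then invoke the chain characterization directly from Definition~\ref{def:transitive-closure}, whereas the paper points to Proposition~\ref{prop:join} (stated only for a binary join) and leaves the reader to generalize; your route avoids that small gap.
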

\begin{proof}
    Follows from Proposition \ref{prop:join}.
\end{proof}


Thus, among agents in the neighbor set $\N_i$, coalitions of agents, which agree on a particular comparison, incorporate that comparison into the aggregate preference relation. The value of $r$ is the threshold for constituting a ``majority rule,'' the minimum number of agents needed to form a coalition.
If $r$ is small, only a few agents need to come to a consensus on a particular comparison. Similarly, an agent with large $r$, requires a larger coalition of agents to reach consensus. Agents, thus, have different values of $r$, depending on their level of stubbornness. In the extremal cases, if $r= n$, then the median is the join projection $\bigjoin_{m=1}^n \pi_{j_m}$. On the other hand, if $r=1$, the median is the meet projection $\bigmeet_{m=1}^n \pi_{j_m}$. 

\paragraph{Updates}
We examine four update functions which, along with the a choice of $\psi$ and $\Aggregate(\cdot)$, reflect the personality of an agent. Suppose $\pi_{\mathrm{agg}} = \Aggregate_i\left[\psi_j(\pi_j,\pi_i)\right]_{j \in \N_i}$ is the result of aggregating messages. Then, an agent that is single-minded will not take into account the preferences of neighbors, i.e.,~$\varphi_i(\pi_i,\pi_{\mathrm{agg}}) = \pi_i$, the \emph{prior update} rule. On the other hand, an agent who is unusually open-minded might ignore their prior preferences, i.e.,~$\varphi_i(\pi_i,\pi_{\mathrm{agg}}) = \pi_{\mathrm{agg}}$, the \emph{posterior update} rule. The choice of meet or join for $\varphi$ has interesting implications as well. Suppose an agent is examining the validity of a prior $a \prefers b$. If that comparison is in $\pi$, then the \emph{meet update}, i.e.,~$\varphi_i(\pi_i,\pi_{\mathrm{agg}}) = \pi_i \meet \pi_{\mathrm{agg}}$, reflects a confirmation bias.
On the other hand, if an agent is to learn new comparisons by observing other agents' preferences, the join, i.e.,~$\varphi_i(\pi_i,\pi_{\mathrm{agg}}) = \pi_i \join \pi_{\mathrm{agg}}$, models the process of integrating a prior-held preference with new observations.

\vspace{-0.25em}
\section{Stable Preference Profiles}
\label{sec:stable}
\vspace{-0.25em}

Stable preference profiles are those fixed in the preference dynamical system, e.g.,~the message-passing model in Section \ref{sec:dynamics}. In this section, we study the structure of stable preferences as well as provide sufficient conditions for when an initial preference profile will converge to a stable preference profile.

Suppose $F = (F_1,F_2,\dots,F_N)$ is given, i.e.,~$F: \prod_{i \in \N}  \Pref(\A) \to \prod_{i \in \N}  \Pref(\A)$ is an arbitrary iterative map sending preference profiles to preference profiles. The \emph{equilibrium points} of the resulting global preference dynamical system
\begin{align}
    \begin{aligned}
        \profile(t+1) &=& F\left( \profile(t) \right) \\
        \profile(0) &=& \profile_0
    \end{aligned} \label{eq:global-pi}
\end{align}
are collected in the following set $\SS = \{\profile \in \prod_{i \in \N}  \Pref(\A) : F(\profile) = \profile \}$, precisely the fixed points of $F$.

\vspace{-0.5em}
\subsection{Existence \& Structure of Equilibrium Points}
\vspace{-0.25em}

We constrain the function class of iterative maps $F$ in \eqref{eq:global-pi} to the class of monotone maps which, we argue, lend themselves to modeling preference updates. Monotone maps preserve the product information order on preference profiles. A counterfactual interpretation is in order. If an agent were to reveal more preferences not contradicting existing ones, then the preference relation held in the next round would only contain more comparisons than if the agent decided to not reveal the additional preferences. The following lemma discusses conditions under which the iterative maps $F$ defined in Section \ref{sec:dynamics} are monotone.

\begin{lemma} \label{lem:message-passing}
    Suppose, for each $i \in \N$, the message-passing map $\psi_i$ is isotropic and monotone, and suppose the aggregation function $\Aggregate_i(\cdot)$ as well as the update function $\varphi_i$ are (evaluations of) lattice polynomials (see Appendix). Then, the iterative map $F = (F_1,F_2,\dots,F_N)$ defined by the composition of these maps \eqref{eq:message-passing} is monotone.
\end{lemma}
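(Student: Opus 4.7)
The plan is to unpack the composition in \eqref{eq:message-passing} and verify monotonicity one layer at a time, relying on the fact that monotone maps compose and that lattice polynomials are monotone in the product order (Lemma \ref{lem:lattice-polynomial}).

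First, I would fix two profiles $\profile \preceq \profile'$ in the product information order, i.e.\ $\pi_i \preceq \pi_i'$ for every $i \in \N$, and aim to show $F_i(\profile) \preceq F_i(\profile')$ for each $i$, which by definition of the product order gives $F(\profile) \preceq F(\profile')$. Because $\psi_j$ is isotropic, the message from $j$ does not depend on the receiver's preference, so we may write $\psi_j(\pi_j, \pi_i) = \psi_j(\pi_j)$. Monotonicity of $\psi_j$ then yields $\psi_j(\pi_j) \preceq \psi_j(\pi_j')$ for each $j \in \N_i$.

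Next, since $\Aggregate_i(\cdot)$ is given as (an evaluation of) a lattice polynomial $p_i$ in $|\N_i|$ variables, Lemma \ref{lem:lattice-polynomial} asserts that $\mathrm{ev}_{p_i}: \Pref(\A)^{|\N_i|} \to \Pref(\A)$ is monotone in the product lattice. Applying this to the componentwise inequality $\left[\psi_j(\pi_j)\right]_{j \in \N_i} \preceq \left[\psi_j(\pi_j')\right]_{j \in \N_i}$ established above gives
\begin{align*}
\pi_{\mathrm{agg}} := \Aggregate_i\!\left[\psi_j(\pi_j)\right]_{j \in \N_i} \ \preceq\ \Aggregate_i\!\left[\psi_j(\pi_j')\right]_{j \in \N_i} =: \pi_{\mathrm{agg}}'.
\end{align*}

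Finally, $\varphi_i$ is (an evaluation of) a lattice polynomial in two variables, so by Lemma \ref{lem:lattice-polynomial} again it is monotone in the product lattice $\Pref(\A) \times \Pref(\A)$. Combined with $\pi_i \preceq \pi_i'$ and $\pi_{\mathrm{agg}} \preceq \pi_{\mathrm{agg}}'$, this gives
\begin{align*}
F_i(\profile) = \varphi_i(\pi_i, \pi_{\mathrm{agg}}) \ \preceq\ \varphi_i(\pi_i', \pi_{\mathrm{agg}}') = F_i(\profile'),
\end{align*}
for every $i \in \N$, so $F$ is monotone on the product lattice $\prod_{i \in \N}\Pref(\A)$.

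I do not expect any serious obstacle: the argument is a chain of three monotonicity facts (one from the hypothesis on $\psi_j$, two from Lemma \ref{lem:lattice-polynomial}), plus the observation that isotropy collapses the second argument of $\psi_j$ so that feedback through $\pi_i$ cannot destroy monotonicity. The only mild care needed is to treat $\Aggregate_i$ as a lattice polynomial with one slot per neighbor and to verify monotonicity in each of its arguments simultaneously, which is exactly the content of Lemma \ref{lem:lattice-polynomial} applied to the product lattice.
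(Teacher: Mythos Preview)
Your proof is correct and follows exactly the approach the paper takes: the paper's one-line argument (``Application of Lemma~\ref{lem:lattice-polynomial} and the fact that the composition and product of monotone functions is monotone'') is precisely what you have unpacked layer by layer. Your explicit use of isotropy to collapse $\psi_j(\pi_j,\pi_i)$ to $\psi_j(\pi_j)$ makes clear why that hypothesis is needed, which the paper leaves implicit.
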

\begin{proof}
    Application of Lemma \ref{lem:lattice-polynomial} (see Appendix) and the fact that the composition and product of monotone functions is monotone.
\end{proof}

The usual approach to guarantee the existence of equilibrium points in discrete-time systems is to guarantee the existence of fixed points of the iterative map. However, standard approaches to proving the existence of fixed points, e.g.,~Brouwer's fixed-point theorem, require, at a minimum, continuity of the iterative map, which we do not have here. The following fixed point theorem, due to Tarski \cite{tarski1941} and Knaster \cite{knaster1928}, guarantees the existence of fixed points without continuity.

\begin{lemma}[Tarski Fixed Point Theorem \cite{tarski1955}] \label{lem:tfpt}
    Suppose $\lattice$ is a complete lattice and $f: \lattice \to \lattice$ is monotone. Then, the fixed point set $\Fix(f)$ is a complete lattice.
\end{lemma}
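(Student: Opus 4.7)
The plan is to follow the classical Knaster--Tarski argument in two stages: first establish the existence of a least and a greatest fixed point of $f$, and then use that construction on appropriately chosen sub-intervals of $\L$ to produce arbitrary joins and meets inside $\Fix(f)$.

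For the first stage, I would consider the set of \emph{post-fixed points} $D = \{\pi \in \L : f(\pi) \preceq \pi\}$. Since $\L$ is complete, $\top \in D$ and hence $D$ is non-empty, so $a = \bigmeet D$ exists in $\L$. For every $\pi \in D$, $a \preceq \pi$, and by monotonicity $f(a) \preceq f(\pi) \preceq \pi$; taking the meet over $\pi \in D$ gives $f(a) \preceq a$, so $a \in D$. Applying $f$ once more and using monotonicity yields $f(f(a)) \preceq f(a)$, so $f(a) \in D$, hence $a \preceq f(a)$. Antisymmetry then gives $f(a) = a$, and by construction $a$ is the least fixed point. The dual argument applied to $U = \{\pi \in \L : \pi \preceq f(\pi)\}$ produces a greatest fixed point $b = \bigjoin U$.

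For the second stage, let $S \subseteq \Fix(f)$ be arbitrary; I must show the join and meet of $S$ exist in $\Fix(f)$. Let $u = \bigjoin S$ (computed in $\L$). For each $\pi \in S$ we have $\pi = f(\pi) \preceq f(u)$ by monotonicity, so $u \preceq f(u)$. The interval $[u, \top] = \{\pi \in \L : u \preceq \pi\}$ is itself a complete lattice (meets and joins inherited from $\L$, with top $\top$ and bottom $u$), and $f$ restricts to a monotone self-map $f|_{[u,\top]}$ because $\pi \succeq u$ implies $f(\pi) \succeq f(u) \succeq u$. Applying the first stage of the argument to this restriction yields a least fixed point $u^\star \in [u,\top]$; by construction, $u^\star$ dominates every element of $S$, and any other fixed point of $f$ dominating every element of $S$ lies in $[u,\top]$ and hence dominates $u^\star$. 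Thus $u^\star$ is the join of $S$ inside $\Fix(f)$. Dually, setting $v = \bigmeet S$ and restricting $f$ to $[\bot, v]$ produces the meet of $S$ inside $\Fix(f)$. Taking $S = \emptyset$ recovers the top and bottom of $\Fix(f)$, so $\Fix(f)$ has arbitrary meets and joins.

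The main conceptual obstacle is recognizing that meets and joins in $\Fix(f)$ are in general \emph{not} the meets and joins computed in $\L$: the supremum $u = \bigjoin S$ in $\L$ need not itself be a fixed point of $f$, only a post-fixed-point-like candidate satisfying $u \preceq f(u)$. The technical trick that resolves this is the observation that the sub-interval $[u,\top]$ is closed under $f$ and is itself a complete lattice, so the least-fixed-point construction from the first stage can be re-run there to ``round up'' $u$ to the smallest fixed point above it. Once this localization move is in hand, the remainder of the argument is bookkeeping.
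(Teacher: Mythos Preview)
Your argument is correct and is precisely the classical Knaster--Tarski proof; the paper itself does not give an independent argument but simply cites \cite[Theorem~12.2]{roman2008}, whose proof proceeds along the same lines you outline. Your closing observation that meets and joins in $\Fix(f)$ need not coincide with those in $\L$ is also explicitly noted in the paper immediately after Theorem~\ref{thm:dynamics}.
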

\begin{proof}
    See \cite[Theorem 12.2]{roman2008}. 
\end{proof}

The Tarski Fixed Point Theorem does more than guarantee the existence of fixed points, including a minimum and maximum fixed point which coincide precisely when there is exactly one fixed point. It also sheds light on the structure of the fixed points. We apply the Tarski Fixed Point Theorem to characterize the equilibrium points of the preference dynamics \eqref{eq:global-pi} with a monotone iterative map.

\begin{theorem} \label{thm:dynamics}
    Suppose $\profile(t)$ evolves according to \eqref{eq:global-pi} with iterative map $F$, and suppose $F$ is monotone. Then, the solutions $\SS$ form a complete lattice.
\end{theorem}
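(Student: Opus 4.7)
The plan is to identify $\SS$ with the fixed-point set $\Fix(F)$ of the iterative map $F$ and then invoke the Tarski Fixed Point Theorem (Lemma \ref{lem:tfpt}) directly. For this, the only non-trivial thing to check is that the ambient space on which $F$ acts is a complete lattice, since monotonicity of $F$ is given as a hypothesis.

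First I would observe that $\SS = \Fix(F)$ by the definition of equilibria of \eqref{eq:global-pi}: a profile $\profile$ satisfies $\profile(t+1) = \profile(t)$ under the iteration if and only if $F(\profile) = \profile$. Next, I would argue that $\prod_{i \in \N} \Pref(\A)$ is a complete lattice under the product information order. By Theorem \ref{thm:complete-lattice}, each factor $(\Pref(\A), \preceq)$ is a complete lattice with meets and joins given by \eqref{eq:meets-joins}. Finite products of complete lattices are complete lattices with componentwise meets and joins, i.e., for any family $\{\profile^{(k)}\}_{k \in K} \subseteq \prod_{i \in \N} \Pref(\A)$,
\begin{align*}
    \Bigl( \bigmeet_{k \in K} \profile^{(k)} \Bigr)_i = \bigmeet_{k \in K} \pi_i^{(k)}, \qquad \Bigl( \bigjoin_{k \in K} \profile^{(k)} \Bigr)_i = \bigjoin_{k \in K} \pi_i^{(k)},
\end{align*}
and these satisfy the defining universal properties of \eqref{eq:complete lattice} componentwise by virtue of the product order.

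With both ingredients in place, I would apply Lemma \ref{lem:tfpt} to $F: \prod_{i \in \N} \Pref(\A) \to \prod_{i \in \N} \Pref(\A)$, which is monotone by hypothesis and defined on a complete lattice by the previous step. The conclusion is that $\Fix(F)$ is a complete lattice, which equals $\SS$.

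I do not anticipate a substantive obstacle here: the content of the theorem is an application of Tarski--Knaster once the product structure is recognized, and no continuity or finiteness assumption on $\A$ is needed. The only mildly delicate point is being explicit that the lattice structure induced on $\SS$ by Tarski is not necessarily the restriction of the product meet/join of $\prod_{i \in \N} \Pref(\A)$ to $\SS$, but rather the one inherited from the partial order restricted to $\SS$; this is automatic from the statement of Lemma \ref{lem:tfpt} and need not be belabored in the proof.
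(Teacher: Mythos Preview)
Your proposal is correct and mirrors the paper's own proof essentially line for line: invoke Theorem \ref{thm:complete-lattice} to get completeness of $\Pref(\A)$, pass to the finite product to get completeness of $\prod_{i \in \N}\Pref(\A)$, and apply Lemma \ref{lem:tfpt} to the monotone map $F$. Your closing caveat about $\meet_{\SS},\join_{\SS}$ differing from the ambient operations is also exactly the remark the paper makes immediately after the theorem.
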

\begin{proof}
    By Theorem \ref{thm:complete-lattice}, $\Pref(\A)$ is complete. It follows that $\prod_{i \in \N}  \Pref(\A)$ is complete: the finite product of complete lattices is complete. Apply Lemma \ref{lem:tfpt}.
\end{proof}

In particular, Theorem \ref{thm:dynamics}, together with Lemma \ref{lem:message-passing}, implies that the message-passing preference mechanisms discussed in Section \ref{sec:dynamics} have stable preference profiles. Thus, equilibrium points of preference dynamics exist, although are not, in general, unique. However, by virtue of the fixed points forming a lattice, there exist unique greatest and least equilibrium points with respect to the product information order. The greatest (resp.~least) equilibrium point is the stable preference profile with the most (resp.~least) alternatives compared. More generally, the meet and join of stable preference profiles exist: if $\profile, \profile'$ are stable preference profiles, there is a largest \emph{stable} preference profile $\profile \meet_{\SS} \profile'$ contained by both $\profile$ and $\profile'$. There also exists a smallest \emph{stable} preference profile $\profile \join_{\SS} \profile'$ containing both $\profile$ and $\profile'$.


\vspace{-0.25em}
\subsection{Convergence to Equilibrium Points}
\vspace{-0.25em}

It is well-known that the standard proof of the Tarski Fixed Point Theorem is ``non-constructive'' \cite{cousot1979} in that it does not explicitly produce a fixed point, only guarantees the existence of a fixed point. Thus, Theorem \ref{thm:dynamics} is only a result about the existence and structure of stable preferences, not a method to compute them. In the remainder of this section, we consider iterative maps $F$ which guarantee the convergence of $\profile(0)$ to some stable preference profile $\profile \in \SS$. We say $\profile(t)$ \emph{converges in finite time} to $\profile$ if there exists $t_0 \geq 0$ such that $\profile(t) = \profile$ for all $t \geq t_0$. In the following two results, we make the assumption that $\A$ is finite, although it is possible (though technical) to also reason about convergence in the case that $\A$ is not finite (for instance, see \cite[Theorem 12.9]{roman2008}).
    
\begin{theorem} \label{theorem:constructive}
    Suppose $\profile(t)$ evolves according to \eqref{eq:global-pi} with iterative map $F$, suppose $\A$ is finite, and suppose $F$ is inflationary. Given an initial preference profile $\profile(0) \in \prod_{i \in \N} \Pref(\A)$, let $\profile^{\ast} = \bigjoin_{t \geq 0} \profile(t)$. Then, $\profile^{\ast} \in \SS$, and $\profile(t)$ converges to $\profile^{\ast}$ in finite time.
\end{theorem}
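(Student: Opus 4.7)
My proof plan exploits two elementary observations: first, inflationarity forces the trajectory to be a monotone chain, and second, finiteness of $\A$ makes the ambient lattice finite, so every monotone chain must stabilize.

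First I would record that, by hypothesis, $F(\profile) \succeq \profile$ for every profile, hence the sequence generated by the dynamics satisfies
\[
\profile(0) \preceq \profile(1) \preceq \profile(2) \preceq \cdots,
\]
i.e., it is a non-decreasing chain in $\prod_{i \in \N}\Pref(\A)$ under the product information order. Next I would observe that since $\A$ is finite, $\Pref(\A)$ is a subset of the finite set $\powerset{\A\times \A}$, hence is finite, and therefore so is the finite product $\prod_{i \in \N}\Pref(\A)$. A non-decreasing chain in a finite poset cannot strictly increase forever, so there exists a smallest $t_0 \geq 0$ such that $\profile(t_0+1) = \profile(t_0)$; for all $t \geq t_0$ we then have $\profile(t) = \profile(t_0)$ by a trivial induction.

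Letting $\profile^{\star} := \profile(t_0)$, the equality $F(\profile^{\star}) = \profile(t_0+1) = \profile(t_0) = \profile^{\star}$ shows that $\profile^{\star}$ is a fixed point of $F$, so $\profile^{\star} \in \SS$. Because the chain is monotone and eventually constant at $\profile(t_0)$, its least upper bound in the (complete, by Theorem \ref{thm:dynamics}'s proof) lattice $\prod_{i \in \N}\Pref(\A)$ is exactly $\profile(t_0)$; that is,
\[
\bigjoin_{t \geq 0} \profile(t) = \profile(t_0) = \profile^{\star},
\]
matching the $\profile^{\ast}$ defined in the statement. Convergence in finite time is then immediate from the witness $t_0$.

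The only even mildly subtle step is the stabilization argument: in an infinite complete lattice a monotone chain could fail to stabilize, which is exactly why the finiteness assumption on $\A$ is invoked. I do not expect any significant obstacle, but if one wanted a more quantitative statement, the stabilization time admits the crude bound $t_0 \leq N \cdot |\A|^{2}$, since each coordinate can only add comparisons (subsets of $\A\times \A$) finitely many times before saturating.
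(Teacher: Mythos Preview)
Your proposal is correct and follows essentially the same approach as the paper: both use inflationarity to produce an ascending chain, invoke finiteness of $\A$ to conclude the product lattice $\prod_{i \in \N}\Pref(\A)$ is finite so the chain stabilizes at some $t_0$, and then identify $\profile(t_0)$ simultaneously as a fixed point of $F$ and as the join $\bigjoin_{t\geq 0}\profile(t)$. The only addition is your crude bound $t_0 \leq N\cdot |\A|^2$, which the paper does not state.
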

\begin{proof}
    Let $\profile(0) \in \prod_{i \in \N} \Pref(\A)$, and suppose $F(\profile) \succeq \profile$ for all $\profile$. In particular, $F\left( \profile(0) \right) \succeq \profile(0)$, $F \left( F\left( \profile(0) \right)\right) \succeq F\left( \profile(0) \right)$, and so on. Thus, trajectories form an ascending chain $\cdots  \succeq F^t\left( \profile(0) \right) \succeq \cdots \succeq F\left( \profile(0) \right) \succeq \profile(0)$.
    Because $\prod_{i \in \N} \Pref(\A)$ is a complete lattice, the quantity $\profile^{\ast} = \bigjoin_{t \geq 0} F^t\left( \profile(0) \right)$
    exists. Because $\A$ is finite, and, thus, the lattice $\prod_{i \in \N} \Pref(A)$ is finite, $\profile^{\ast} \in \{F^t\left(\profile(0) \right)\}_{t=0}^{\infty}$, i.e.,~the chain eventually terminates at $\profile^{\ast}$. Therefore, there exist $t_0 \geq 0$ such that $F^{t_0}\left(\profile(0) \right) = F^{t_0 +t}\left(\profile(0) \right) = \profile^{\ast}$ for all $t \geq t_0$ in the ascending chain above.
\end{proof}



A similar argument and result holds for $F$ deflationary. In particular, the join (resp.~meet) update, $\varphi(\pi,\pi_{\text{agg}}) = \pi \join \pi_{\text{agg}}$ (resp.~$\varphi(\pi,\pi_{\text{agg}}) = \pi \meet \pi_{\text{agg}}$) satisfies the inflationary (resp.~deflationary) condition locally. In general, if $F$ is inflationary, then, for all $i \in \N$, $\pi_i(t+1)$ contains every comparison in $\pi_i(t)$. Thus, comparisons are added to each agent's preferences each round, but never removed. On the other hand, if $F$ is deflationary, comparisons are removed from each agent's preferences each round, but never added. In future work, we want to relax the assumptions of Theorem \ref{theorem:constructive} to encompass preference dynamics in which some agents may add comparisons each round, while others may remove them.

\vspace{-0.5cm}
\section{Experiments}
\label{sec:experiments}

In this section, we perform a simulation of the preference dynamics model proposed in Section \ref{sec:dynamics} and validate our theoretical findings in Section \ref{sec:stable}. We measure the convergence of preference profiles to equilibria using the Kendall tau \cite{kendall1938} (also known as the Kemeny-Snell-Bogart \cite{nishimura2023}) metric,
 defined $d(\pi_1,\pi_2) =  \#\{ (a,b) : (a \prefers b) \in \pi_1, (b \prefers a) \in \pi_2  \}$
for preference relations $\pi_1, \pi_2 \in \Pref(\A)$. This distance is a measure of disagreement between two agents, where agents are said to disagree if they hold opposite comparisons.
Given a preference profile $\profile$ and a graph $\G = (\N,\E)$, consider the quantity $\D(\profile) = \sum_{(i,j) \in \E} d(\pi_i,\pi_j)$.
Because the dynamical system \eqref{eq:global-pi} resembles a diffusion process, we instill the term \emph{Dirichlet energy} to refer to the quantity $\D(\cdot)$. The Dirichlet energy quantifies the total disagreement between agents who are connected in $\G$. 

\vspace{-0.25em}
\subsection{Experimental Setup}
\vspace{-0.25em}

 Fixing the number of agents and neighbors of each agent, we assign message $\psi$, aggregation $\Aggregate(\cdot)$, and update $\varphi$ functions to every agent; we set $\varphi$ and $\psi$ to be the same rule for every agent, but let $\Aggregate(\cdot)$ vary from agent to agent. We assume agents are truthful, i.e.,~$\psi(\pi_i,\pi_j) = \pi_i$, and we assume agents update via the join operation, i.e.,~$\varphi(\pi_i,\pi_{\mathrm{agg}}) = \pi_i \join \pi_{\mathrm{agg}}$. We let $\Aggregate(\cdot) = \Median_r(\cdot)$, and, modeling agents with different levels of stubbornness, assign different values of $r \in \{1,2,\dots,k\}$ to agents randomly. Thus, the iterative map in the global dynamics \eqref{eq:global-pi} is monotone and inflationary. Next, we choose several initial preference profiles over the alternative set $\A = \{1,2,3,4,5\}$. Each initial profile $\profile(0)$ is given by a choice of preference relation for each agent, which, in turn, is produced by sampling possible edges, selecting each $a \prefers b$ with probability $p$, and discarding preference relations that violate transitivity. Then, we randomly generate $k$-regular graphs with nodes $\N = \{1,2,\dots,N\}$ for several values of $k$, using an algorithm introduced by \cite{kim2003}. We compute the trajectories of the preference dynamics equation \eqref{eq:global-pi} for the initial preference profiles we selected and, at every round $t=0, \dots, t_{\max}$,  we also compute the Kendall tau distance $d\left(\pi_i(t),\pi_j(t)\right)$ for every $(i,j) \in \E$ and the Dirichlet energy $\D\left( \profile(t) \right)$.


\vspace{-0.25em}
\subsection{Discussion of Results}
\vspace{-0.25em}

We considered $N=20$ agents and compared the convergence of their preference profiles for various $k$-regular graphs.
As expected, every trajectory converged to a stable preference profile (see Theorem \ref{theorem:constructive}). We make some additional observations. The Dirichlet energy, for each trajectory, was non-decreasing, reflecting a greater or equal level of disagreement each round.
We were surprised that trajectories tended to cluster into stable preference profiles with a similar level of total disagreement. One possible explanation for this could be that several profiles in the lattice of stable profiles (see Theorem \ref{thm:complete-lattice}) have similar levels of Dirichlet energy. We examine the emergent behavior of disagreement more closely by comparing the Kendall tau distances between connected agents in the initial profile $\profile(0)$ versus the final stable profile $\profile(t_{\max})$. We display our results with a heat map over the network (see Fig.~\ref{fig:2}). We observe, at least in this example, that agents, who initially have some disagreement, end up disagreeing more, while agents who don't disagree initially, never disagree. Hence, it appears, the dynamics make disagreements more pronounced.

\begin{figure*}[ht]
    \begin{subfigure}[b]{0.6\textwidth}
    \includegraphics[width=\textwidth]{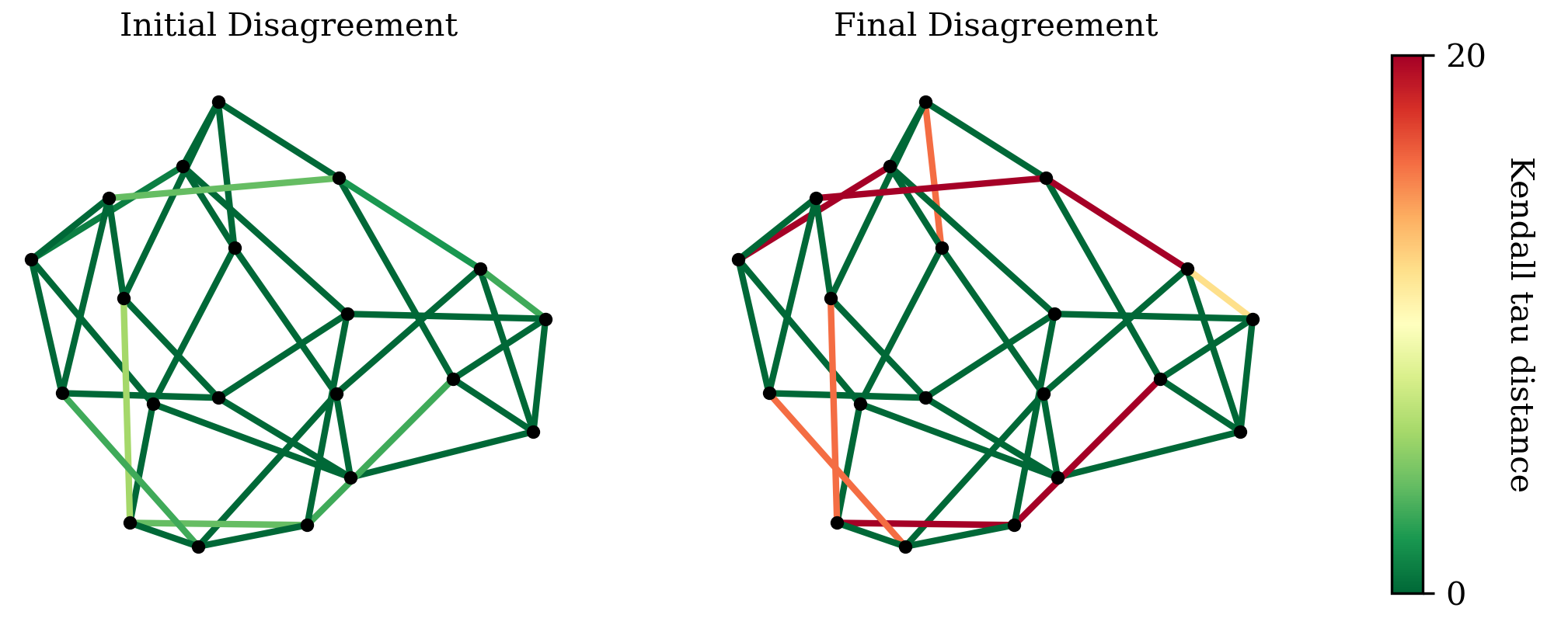}
    \caption{}
    \label{fig:2a}
    \end{subfigure}
    \hfill
    \begin{subfigure}[b]{0.4\textwidth}
    \centering
    \includegraphics[width=0.8\textwidth]{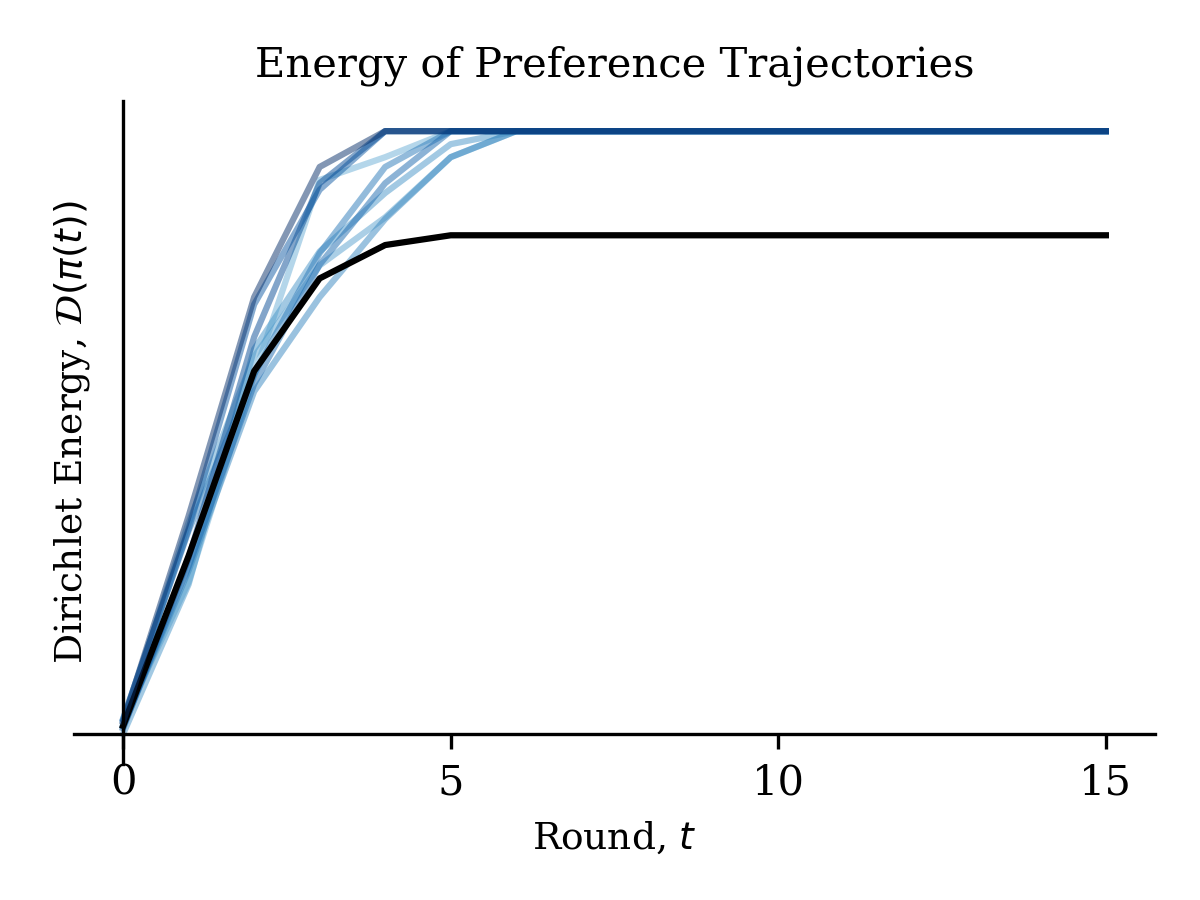}
    \caption{}
    \label{fig:2b}
    \end{subfigure}
    \caption{On a $k$-regular graph ($k=4$) with $N =20$ nodes with a randomly-selected initial preference profile, we plot (a) the Kendall tau distance $d(\pi_i(t),\pi_j(t))$ for every $(i,j) \in \E$ for both $t=0$ (initial disagreement) and $t=15$ (final disagreement) with a heat-map on the edges of the graph. We also plot (b) the Dirichlet energy of the trajectory (black) as well as the energies of other trajectories from different initial conditions (blues).}
    \label{fig:2}
    \vspace{-1em}
\end{figure*}

\section{Conclusion}

We introduced a mathematical framework of lattice dynamical systems on networks and used it to design and analyze a preference dynamics model in which agents reject or accept comparisons between alternatives in a majoritarian fashion. The mechanism we introduced can lead to either dissensus or consensus in the steady-state, depending on the distribution of the $r$-values of agents which reflect their level of stubbornness. In the future, we want to incorporate control inputs, e.g.~agents whose preferences can be controlled, into our model.

\bibliographystyle{ieeetr}
\bibliography{IEEEabrv,biblio}

\appendix

\vspace{-0.25em}
\subsection{Lattice Polynomials}
\vspace{-0.25em}

Given variables $\pi_1,\pi_2,\dots,\pi_n$, a \emph{lattice polynomial} $p$ is a term in the language $\pi_1,\pi_2,\dots,\pi_n~\vert~\pi \meet \pi'~\vert~\pi \join \pi'$, i.e.,~an expression formed by finite application of these symbols and parentheses. Suppose $\lattice$ is  lattice. Then, a lattice polynomial $p$ defines an evaluation map $\mathrm{ev}_p: \lattice^n \to \lattice$ by substituting an element of $\lattice$ into each variable of $p$.

\begin{lemma} \label{lem:lattice-polynomial}
    Suppose $p$ is a lattice polynomial. Then, $\mathrm{ev}_p(\pi_1,\pi_2,\dots,\pi_n)$ is monotone in the product lattice $\lattice^n$.
\end{lemma}
\begin{proof}
    See \cite[\S II.5 Lemma 4]{birkhoff1940}.
\end{proof}

\vspace{-0.25em}
\subsection{Closure Operators \& Systems}
\vspace{-0.25em}

Suppose $\X$ is a set. A \emph{closure operator} is a monotone, inflationary map $\cl: \powerset{\X} \to \powerset{\X}$ satisfying the additional property: $\cl(\cl(S)) = \cl(S)$ for all $S \subseteq \X$. A \emph{closure system} is a collection $\F = \{S_{\alpha}\}_{\alpha \in I}$ of subsets of $\X$ such that $\bigcap_{\alpha \in I'} S_{\alpha} \in \F$ for every subcollection $\F' = \{S_{\alpha}\}_{\alpha \in I' \subseteq I}$. It follows that a closure system is a complete lattice, and there is a correspondence between closure operators and closure systems given by
\begin{align}
\begin{aligned}
    \cl &\mapsto& \Cl(\X) \triangleq \{ S \subseteq \X : \cl(S)=S \}, \\
          \F &\mapsto& \biggl( \cl: S \mapsto \bigcap  \{ S' \in \F : S \subseteq S' \} \biggr).
\end{aligned} \label{eq:closure}
\end{align}

\begin{lemma} \label{lem:joins-closure}
   The correspondence in \eqref{eq:closure} are inverse bijections. Furthermore, $\Cl(\X)$ is a complete lattice with the join and meet operations for an arbitrary collection of subsets $\U \subseteq \Cl(\X)$,
\end{lemma}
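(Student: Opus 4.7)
The plan is to verify the two assignments in \eqref{eq:closure} are mutually inverse, and then to read off the lattice operations from the closure system picture. First I would fix notation: write $\Phi: \cl \mapsto \Cl(\X)$ and $\Psi: \F \mapsto \cl_{\F}$, where $\cl_{\F}(S) = \bigcap \{S' \in \F : S \subseteq S'\}$. Before showing invertibility, I would check that both assignments are well-defined: (a) $\Cl(\X) = \{S : \cl(S) = S\}$ is closed under arbitrary intersection because if $\cl(S_\alpha) = S_\alpha$ for all $\alpha \in I$ and $T = \bigcap_\alpha S_\alpha$, then monotonicity gives $\cl(T) \subseteq \cl(S_\alpha) = S_\alpha$ for each $\alpha$, so $\cl(T) \subseteq T$, and the reverse inclusion is by inflationarity; (b) $\cl_{\F}$ is inflationary and monotone by construction, and idempotent because $\cl_{\F}(S)$ is itself an intersection of members of $\F$, hence a member of $\F$, hence a fixed point of $\cl_{\F}$.

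Next I would prove $\Psi \circ \Phi = \mathrm{id}$. Given a closure operator $\cl$, the claim is that $\cl(S) = \bigcap\{S' \in \Cl(\X) : S \subseteq S'\}$ for every $S \subseteq \X$. The containment ``$\supseteq$'' follows because $\cl(S)$ is itself a fixed point of $\cl$ by idempotence and contains $S$ by inflationarity, so it appears in the intersection on the right. The containment ``$\subseteq$'' follows from monotonicity: if $S \subseteq S'$ and $\cl(S') = S'$, then $\cl(S) \subseteq \cl(S') = S'$, so $\cl(S)$ sits inside every term of the intersection.

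For $\Phi \circ \Psi = \mathrm{id}$, given a closure system $\F$, I must show the fixed points of $\cl_{\F}$ are exactly the elements of $\F$. If $S \in \F$, then $S$ appears as one of the sets over which the intersection in $\cl_{\F}(S)$ ranges, so $\cl_{\F}(S) \subseteq S$, and equality follows from inflationarity. Conversely, if $\cl_{\F}(S) = S$, then $S$ is expressed as an intersection of members of $\F$, which lies in $\F$ by the defining property of a closure system.

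Finally, for the lattice statement, $\Cl(\X)$ is a complete lattice with meets given by intersection and joins given by $\bigjoin_{S_\alpha \in \U} S_\alpha = \cl\bigl(\bigcup_{S_\alpha \in \U} S_\alpha\bigr)$, where $\cl$ is the closure operator $\Psi(\Cl(\X))$. The meet formula is immediate from closure under intersection; the join formula follows because $\cl(\bigcup_\alpha S_\alpha)$ is a fixed point of $\cl$ (hence in $\Cl(\X)$) containing every $S_\alpha$, and any $T \in \Cl(\X)$ containing every $S_\alpha$ must contain $\bigcup_\alpha S_\alpha$ and therefore, by monotonicity and $T = \cl(T)$, must contain $\cl(\bigcup_\alpha S_\alpha)$. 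I do not expect a serious obstacle; the only delicate step is the idempotence-plus-inflation argument pinning down $\cl(S)$ as the least closed set containing $S$, which is the hinge of the bijection.
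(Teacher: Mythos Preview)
Your argument is correct and is the standard textbook proof of this correspondence; the paper itself does not give an independent argument but simply cites \cite[Theorems~3.7--3.8]{roman2008}, and your verification is essentially what one finds there. One small point you might make explicit for completeness: the intersection defining $\cl_{\F}(S)$ is never vacuous because the empty subcollection of a closure system has intersection $\X$, forcing $\X \in \F$; otherwise the proof is airtight.
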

   \begin{align}
   \begin{aligned}
       \bigjoin_{\Cl(\X)} \U &=& \cl \biggl( \bigcup \{S: S \in \U\} \biggr), \\
       \bigmeet_{\Cl(\X)} \U &=& \bigcap \{S: S \in \U\}.
   \end{aligned} \label{eq:joins-closure}
   \end{align}
   
\begin{proof}
    See \cite[Theorem 3.7-3.8]{roman2008}.
\end{proof}

\vspace{-0.25em}
\subsection{Proof of Theorem \ref{thm:complete-lattice}}
\vspace{-0.25em}

We first show that the transitive-reflexive closure is a closure operator on the set $\X = \A \times \A$.

\begin{lemma} \label{lem:transitive-closure}
    Suppose $\A$ is an arbitrary set. The transitive-reflexive closure (Definition \ref{def:transitive-closure}) of binary relations on $\A$ is a closure operator.
\end{lemma}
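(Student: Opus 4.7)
The plan is to verify each of the three defining axioms of a closure operator directly from the definition $\pi^{+} = (\pi \cup \epsilon)^{\star} = \bigcup_{k=1}^{\infty} (\pi \cup \epsilon)^{\circ k}$. The argument will rest on three elementary facts about relational composition: that $\circ$ is associative, that $(\pi \cup \epsilon) \circ \sigma \supseteq \sigma$ for any $\sigma$, and that $\circ$ is monotone in each argument with respect to inclusion. None of these requires any structure on $\A$, which is fortunate since $\A$ is only assumed to be an arbitrary set.

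First I would establish the inflationary property: for any $\pi \subseteq \A \times \A$, observe that $\pi \subseteq \pi \cup \epsilon = (\pi \cup \epsilon)^{\circ 1}$, hence $\pi \subseteq \pi^{+}$. Next, for monotonicity, suppose $\pi_1 \subseteq \pi_2$. Then $\pi_1 \cup \epsilon \subseteq \pi_2 \cup \epsilon$, and a short induction on $k$ using monotonicity of $\circ$ gives $(\pi_1 \cup \epsilon)^{\circ k} \subseteq (\pi_2 \cup \epsilon)^{\circ k}$ for every $k \geq 1$; taking the union over $k$ yields $\pi_1^{+} \subseteq \pi_2^{+}$.

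The main content of the lemma is idempotence, $(\pi^{+})^{+} = \pi^{+}$. The inclusion $\pi^{+} \subseteq (\pi^{+})^{+}$ is immediate from the inflationary property just shown. For the reverse inclusion, the key intermediate step is to verify that $\pi^{+}$ is already reflexive and transitive, i.e., that it is a preorder. Reflexivity follows because $\epsilon \subseteq \pi \cup \epsilon \subseteq \pi^{+}$. For transitivity, suppose $(a,b) \in \pi^{+}$ and $(b,c) \in \pi^{+}$; then $(a,b) \in (\pi \cup \epsilon)^{\circ k}$ and $(b,c) \in (\pi \cup \epsilon)^{\circ \ell}$ for some $k, \ell \geq 1$, and by associativity of composition $(a,c) \in (\pi \cup \epsilon)^{\circ(k+\ell)} \subseteq \pi^{+}$.

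Once $\pi^{+}$ is known to be a preorder, idempotence follows by a short induction: $\pi^{+} \cup \epsilon = \pi^{+}$ by reflexivity, and $(\pi^{+})^{\circ k} \subseteq \pi^{+}$ for every $k \geq 1$ by iterated transitivity, so $(\pi^{+})^{+} \subseteq \pi^{+}$. The only step I expect to require any care is the induction establishing that $(\pi \cup \epsilon)^{\circ(k+\ell)}$ is well-defined and equals the composite of $(\pi \cup \epsilon)^{\circ k}$ with $(\pi \cup \epsilon)^{\circ \ell}$; this is purely a bookkeeping exercise with associativity, and I would dispense with it in a sentence. No other obstacle is anticipated.
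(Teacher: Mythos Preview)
Your proposal is correct and follows the same direct-verification approach as the paper: check the inflationary, monotone, and idempotence axioms straight from the definition $\pi^{+}=\bigcup_{k\geq 1}(\pi\cup\epsilon)^{\circ k}$. The paper's own proof is considerably terser---it simply asserts that $\pi^{++}=\pi^{+}$ and $\pi\subseteq\pi^{+}$ are straightforward and reduces monotonicity to the single observation that $\pi_1\subseteq\pi_2$ implies $\pi_1\circ\pi_1\subseteq\pi_2\circ\pi_2$---so your version is essentially a fleshed-out execution of exactly what the paper sketches.
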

\begin{proof}
Suppose $\pi \subseteq \A \times \A$. Then, it is straightforward to check $\pi^{++} = \pi^{+}$, i.e.,~the transitive-reflexive closure of a transitive-reflexive relation is transitive-reflexive, and $\pi^{+} \supseteq \pi$, i.e.,~the transitive-reflexive closure of a relation contains the relation. For monotonicity, it suffices to check that $\pi_1 \subseteq \pi_2$ implies $\pi_1 \circ \pi_1 \subseteq \pi_2 \circ \pi_2$.
\end{proof}

Applying Lemma \ref{lem:joins-closure} and Lemma \ref{lem:transitive-closure} yields our result.

\end{document}